\theoremstyle{definition}%
\newtheorem{teiri}{Theorem}
\newcommand\RedeclareMathOperator{%
  \@ifstar{\def\rmo@s{m}\rmo@redeclare}{\def\rmo@s{o}\rmo@redeclare}%
}
\newcommand\rmo@redeclare[2]{%
  \begingroup \escapechar\m@ne\xdef\@gtempa{{\string#1}}\endgroup
  \expandafter\@ifundefined\@gtempa
     {\@latex@error{\noexpand#1undefined}\@ehc}%
     \relax
  \expandafter\rmo@declmathop\rmo@s{#1}{#2}}
\newcommand\rmo@declmathop[3]{%
  \DeclareRobustCommand{#2}{\qopname\newmcodes@#1{#3}}%
}
\definecolor{grey}{rgb}{0.7, 0.75, 0.71}
\def\dark-red#1{\textcolor[rgb]{0.7,0.0,0.0}{#1}}
\definecolor{amber}{rgb}{1.0, 0.75, 0.0}
\def\...{\dotsc}
\def\intT2{\int_{-T/2}^{T/2}}
\def\sumi1n{\sum_{i=1}^{n}}
\def\sumi1N{\sum_{i=1}^{N}}
\def\sumi0N--{\sum_{i=0}^{N-1}}
\def\ccc{\cdots}
\def\=def{\overset{\text{\small def}}{=}}
\DeclareMathOperator*{\rank}{rank}
\DeclarePairedDelimiterX{\inp}[2]{\langle}{\rangle}{#1, #2}
\def\Fb{\mathbb{F}}
\def\Cc{\mathcal{C}}
\def\CcT{\tilde{\mathcal{C}}}
\def\<{\langle}
\def\>{\rangle}
 \def\mat4#1#2#3#4{
\begin{pmatrix}
 #1&\ccc&#2\\
 \vdots&&\vdots\\
 #3&\ccc&#4
\end{pmatrix}}
\def\0sf{\mathsf{0}}
\def\1sf{\mathsf{1}}
\def\Erm{\mathrm{E}}
\def\0BS{\boldsymbol{0}}
\def\1BS{\boldsymbol{1}}
\def\0B{\mathbf{0}}
\def\1B{\mathbf{1}}
\def\0H{\hat{0}}
\def\1H{\hat{1}}
\def\xH{\hat{x}}
\def\zH{\hat{z}}
\def\EH{\hat{E}}
\def\HH{\hat{H}}
\def\+TT{\texttt{+}}
\def\-{\texttt{-}}
\def\+KB{|+\> \<+|}
\def\-KB{|-\> \<-|}
\def\q0{|0\>}
\def\xiH{\hat{\xi}}
\def\zeroU{\underline{0}}
\def\0U{\underline{0}}
\def\1U{\underline{1}}
\def\aU{\underline{a}}
\def\sU{\underline{s}}
\def\tU{\underline{t}}
\def\vU{\underline{v}}
\def\wU{\underline{w}}
\def\xU{\underline{x}}
\def\zU{\underline{z}}
\def\0UH{\underline{\0H}}
\def\1UH{\underline{\1H}}
\def\xUH{\underline{\xH}}
\def\zUH{\underline{\zH}}
\def\JO{\bar{J}}
\def\tauU{\underline{\tau}}
\def\sigmaU{\underline{\sigma}}
\def\zetaU{\underline{\zeta}}
\def\xiU{\underline{\xi}}
\def\sigmaUH{\hat{\sigmaU}}
\def\zetaUH{\hat{\underline{\zeta}}}
\def\xiUH{\hat{\xiU}}
 \def\Ct{\mathtt{C}}
\RedeclareMathOperator{\Im}{Im}
\def\rank{\mathrm{rank}}
\newcommand{\T}{\mathsf{T}}
\newcommand{\I}{\mathbbm{1}}
\begin{document}
\title{Efficient Mitigation of Error Floors in Quantum Error Correction using Non-Binary Low-Density Parity-Check Codes}
\author{
\IEEEauthorblockN{Kenta Kasai}
\IEEEauthorblockA{
Institution of Science Tokyo\\
Email: kenta@ict.eng.ist.ac.jp}
}
\maketitle

\begin{abstract}
In this paper, we propose an efficient method to reduce error floors in quantum error correction using non-binary low-density parity-check (LDPC) codes. We identify and classify cycle structures in the parity-check matrix where estimated noise becomes trapped, and develop tailored decoding methods for each cycle type. 
For Type-I cycles, we propose a method to make the difference between estimated and true noise degenerate. 
Type-II cycles are shown to be uncorrectable, while for Type-III cycles, we utilize the fact that cycles in non-binary LDPC codes do not necessarily correspond to codewords, allowing us to estimate the true noise. 
Our method significantly improves decoding performance and reduces error floors.
\end{abstract}
\begin{IEEEkeywords}
quantum error correction, low-density parity-check codes
\end{IEEEkeywords}
\IEEEpeerreviewmaketitle

\section{Introduction}
Recent progress in quantum computing has enabled the development of systems comprising tens of reliable logical qubits, built from thousands of noisy physical qubits~\cite{bluvstein2024logical}. Nevertheless, many important applications of quantum computing demand computations involving thousands or even more logical qubits~\cite{preskill2018quantum}. This underscores the pressing need for highly efficient quantum error correction methods capable of supporting large-scale logical qubit architectures.

In \cite{komoto2024quantumerrorcorrectionnear}, a generalization of the construction method of \((J=2,L)\)-regular non-binary low-density parity-check (LDPC) codes from \cite{6017122} was proposed, where the sum-product (SP) algorithm was applied to simultaneously decode X and Z errors in quantum codes. The method achieves performance close to the hashing bound while maintaining scalability and ensuring a constant coding rate as the code length increases. 
In~\cite{komoto2025explicitconstructionclassicalquantum}, a deterministic construction achieving girth 12 was proposed, along with an extension to spatially coupled codes~\cite{CSS_SC_ISIT}. Furthermore,~\cite{kasai2025quantumerrorcorrectiongirth16} presented a randomized construction method achieving girth 16.

For small values of \(L\), the existence of cycles of length $2L$ in the parity-check matrices causes the SP algorithm to fail in estimating the noise exactly, resulting in a high error floor. 
In quantum codes, even if the estimated noise \(\EH\) does not exactly match the true noise \(E\), if \(\EH^\dagger E\) forms a degenerate error, the codeword can still be recovered. 
Using this property, this paper explores an efficient method for searching for \(\EH\) such that \(\EH^\dagger E\) forms a degenerate error, for the codes in \cite{komoto2024quantumerrorcorrectionnear} and \cite{6017122}.

In this paper, we propose a method to reduce the error floor by applying postprocessing with a computational complexity independent of the code length. 
First, we identify the cycle structures in the parity-check matrix where the estimated noise, which causes the error floor, becomes trapped. 
We then classify these cycles into three types and propose a decoding method based on this classification.
For Type-I cycles, we propose a method to determine the estimated noise such that the difference between the estimated noise and the true noise becomes degenerate. 
It is shown that Type-II cycles cannot be corrected. 
For Type-III cycles, we use the fact that in the case of non-binary codes, unlike the binary case, cycles in \((J=2,L)\)-regular codes do not necessarily correspond to codewords. 
Based on this, we propose a method to estimate the true noise.

\section{Preparation}

The Pauli group $\mathcal{P}_n$ is non-commutative; however, by ignoring the global phase factor $\alpha$ of Pauli operators, we obtain the quotient group $\mathcal{P}_n /\{\pm I, \pm i I\}$, which is the subgroup $\{\pm I, \pm i I\} \subset \mathcal{P}_n$ modulo. Here, $I$ represents the identity operator on the space $\mathbb{C}^{2^n}$. This quotient group is isomorphic to the commutative group $\mathbb{Z}_2^{2n}$, and the isomorphism for $E\in \mathcal{P}_n /\{\pm I, \pm i I\}$ is given as follows: 
\begin{align}
 &E=\alpha \bigotimes_{i=1}^n X^{x_i} Z^{z_i} \leftrightarrow (\xU|\zU)=\left(x_1, \ldots, x_n \mid z_1, \ldots, z_n\right).\label{003014_6Jan25}
\end{align}
Using this correspondence, we identify the Pauli error $E$ with its binary representation $(\xU|\zU)$. Whether the vector is a row vector or column vector should be understood from the context.

A very important subclass of stabilizer codes \cite{gottesmanthesis} is the CSS codes \cite{calderbank96,steane96}. CSS codes are a type of stabilizer code characterized by the property that the non-identity components of each stabilizer generator within a tensor product are all equal to $X$ or all equal to $Z$. 

The check matrix of the stabilizer group $S$ is  expressed as binary vectors arranged in rows. 
Without loss of generality, the check matrix $H$ of a CSS code of length $n$ can be expressed as follows:
\begin{align}
 H = 
\begin{pmatrix}
 H_X & O \\
 O & H_Z
 \end{pmatrix}\in \Fb_2^{(m_X+m_Z)\times 2n},\label{222417_12Jan25}
\end{align}
where $H_X, H_Z$ are binary matrices of sizes $m_X\times n$ and $m_Z\times n$, respectively. The commutativity of the stabilizer generators implies that $ H_X H_Z^\T = O$. For simplicity, this paper assumes $m=m_X=m_Z$. The condition $ H_X H_Z^\T = O$ is equivalent to the following:
\begin{align}
 C_Z^{\perp} \subset C_X, \quad C_X^{\perp} \subset C_Z,\label{013431_22Jan25}
\end{align}
where $C_X,C_Z$ are the code spaces defined by $H_X,H_Z$ when considered as parity-check matrices. 
Furthermore, the dimension $k$ of the CSS code of length $n$ can be determined by the following formula:
\[
k = n - \operatorname{rank} H_X - \operatorname{rank} H_Z.
\]

Let us consider the codeword state $|\psi\>$ with a Pauli error $E\leftrightarrow (\xU,\zU)$ applied, resulting in the state $E|\psi\>$. The decoder considered in this paper performs decoding according to the following steps:
\begin{enumerate}
 \item Measure the syndrome $(\sU=H_Z\xU, \tU=H_X\zU)\in \Fb_2^{2m}$.
 \item Based on the syndrome, estimate the noise as $\EH\leftrightarrow(\xUH,\zUH)$ that satisfies the following:
 \begin{align}
 H_X\zUH=\sU, \quad H_Z\xUH=\tU\label{030159_23Jan25}
 \end{align}
 \item 
The codeword $|\psi\>$ is affected by noise $E$, resulting in the state $E|\psi\>$. Applying $\EH^\dagger$ to this state yields $E^\dagger E|\psi\>$.
If $\EH^\dagger E\in S$, then $\EH^\dagger E|\psi\>\propto |\psi\>$, and the codeword is recovered. 
\end{enumerate}

The condition $\EH^\dagger E\in S$ is equivalent to the existence of $\aU=(a_1,\ldots,a_{2m})^\T \in\Fb_2^{2m}$ such that $\prod_{i=1}^m S_i^{a_i}=\EH^\dagger E$, where $S_i$ are the stabilizer generators for $i=1,\ldots,2m$. Moreover, since $\EH^\dagger E\leftrightarrow(\xU+\xUH|\zU+\zUH)$, this is equivalent to:
\begin{align}
\sum_{i=1}^m {a_i} \sU_i=(\xU+\xUH|\zU+\zUH), \quad \text{for } i=1,\ldots,m, 
\end{align}
where $S_i\leftrightarrow\sU_i$. Writing this in terms of matrices and vectors, we get $  H^\T\aU=(\xU+\xUH|\zU+\zUH)$. Specifically, for CSS codes, since the parity-check matrix is given by \eqref{222417_12Jan25}, we use $\aU=(\aU_X|\aU_Z)$ to obtain:
\begin{align}
  & (H_X)^\T \aU_X=\xU+\xUH, \quad (H_Z)^\T \aU_Z=\zU+\zUH\label{025725_23Jan25}
\end{align}
which is equivalent to each of \eqref{041406_21Jan25} and \eqref{021354_20Jan25}:
\begin{align}
&G_X\left(\xU+\xUH\right)=0, \quad G_Z\left(\zU+\zUH\right)=0,\label{041406_21Jan25}
\\& \xUH+\xU\in C_X^\perp,\quad \zUH+\zU\in  C_Z^\perp, \label{021354_20Jan25}
\end{align}
where $G_X,G_Z$ are generator matrices of $H_X,H_Z$, respectively. 

From \eqref{013431_22Jan25}, it follows that \eqref{021354_20Jan25} implies
\begin{align}
 H_X\left(\zU+\zUH\right)=0, \quad H_Z\left(\xU+\xUH\right)=0,\label{022112_22Jan25}
\end{align}
which is equivalent to \eqref{030159_23Jan25}. The following is  sufficient for \eqref{021354_20Jan25}: 
\begin{align}
\xU+\xUH=H_X^{(i)}, \quad \zU+\zUH=H_Z^{(i')}\label{022213_22Jan25}
\end{align}
for some $i,i', $where $H_Z^{(i)}, H_X^{(i')}$ are the $i$-th and $i'$-th row vectors of $H_Z$ and $H_X$, respectively. 

\section{Code Construction}
In this paper, the primary target code uses the parameters $J=2, L=6, q=2^e$, and $e=8$. Codes with varying values of the parameter $L$ are used as comparisons.
Let $N=LP$, $M=JP$, $n=eN$, and $m=eM$. The finite field of size $q$ is denoted as $\Fb_q$.
Using the methods proposed in \cite{6017122} and \cite{komoto2024quantumerrorcorrectionnear}, $H_X$ and $H_Z$ are constructed through the following outline steps. 
\begin{enumerate}
\item Generate a pair of orthogonal $\Fb_2$-valued $(J=2, L, P)$ protograph matrices $(\HH_X, \HH_Z) \in (\Fb_2^{M \times N})^2$ such that the girth is 12.
 \item Generate a pair of orthogonal $\Fb_q$-valued $(J=2, L, P)$ protograph matrices $(H_{\Gamma}=(\gamma_{ij}), H_{\Delta}=(\delta_{ij})) \in (\Fb_q^{M \times N})^2$, having nonzero elements at the same positions as $(\HH_X, \HH_Z)$.
 \item  Generate a pair of orthogonal $\Fb_2$ matrices $(H_X=(A(\gamma_{ij})), H_Z=(A^\T(\delta_{ij}))) \in (\Fb_2^{n \times m})^2$, having nonzero submatrices of size $e\times e$ at the same positions as the nonzero elements of $(\HH_X, \HH_Z)$.
\end{enumerate}
Note that $A(\cdot)$ and $A^\T(\cdot)$ are isomorphism maps from $\Fb_q$ to $\Fb_2^{e\times e}$, as defined in \cite[Appendix B]{komoto2024quantumerrorcorrectionnear}. 
The CSS code defined by the parity-check matrices \eqref{222417_12Jan25} constructed using this $(H_X, H_Z)$ is the code considered in this paper.

Let $G_\Gamma$ and $G_\Delta$ be the corresponding $\Fb_q$-valued generator matrices when $H_\Gamma$ and $H_\Delta$ are regarded as $\Fb_q$-valued parity-check matrices. 
The next theorem give us the finite field representation of the error-correcting condition $\EH^\dagger E \in S$, which is equivalent to \eqref{041406_21Jan25}, the necessary condition \eqref{022112_22Jan25}, and the sufficient condition \eqref{022213_22Jan25}. 
\begin{teiri}[Finite field representation of $\EH^\dagger E\in S$]\label{165121_21Jan25}
For $s_i, t_i, x_j, z_j \in \mathbb{F}_2^e$, we define $\sigma_i, \tau_i, \xi_j, \zeta_j \in \mathbb{F}_q$ are such that $\underline{w}\left(\sigma_i\right)=s_i, \underline{v}\left(\tau_i\right)=t_i, \underline{w}\left(\xi_j\right)=$ $x_j$ and $\underline{v}\left(\zeta_j\right)=z_j$ hold. 
Note that $\underline{v}(\cdot)$ and $\underline{w}(\cdot)$ were defined in \cite[Appendix B]{komoto2024quantumerrorcorrectionnear}.
The following statements hold: 
\eqref{041406_21Jan25}, \eqref{022112_22Jan25}, and \eqref{022213_22Jan25} are equivalent to \eqref{124833_23Jan25}, \eqref{124841_23Jan25}, and \eqref{124848_23Jan25}, respectively.
\begin{align}
 & G_\Gamma(\xiU+\xiUH)=\underline{0}, \hspace{-1cm} && G_\Delta(\zetaU+\zetaUH)=\underline{0}, \label{124833_23Jan25} \\
 & H_\Gamma\left(\zetaU+\zetaUH\right)=0,  \hspace{-1cm}&& H_\Delta\left(\xiU+\xiUH\right)=0, \label{124841_23Jan25} \\
 & \xiU+\xiUH=H_\Gamma^{(i)}, \hspace{-1cm} && \zetaU+\zetaUH=H_\Delta^{(i')}, \label{124848_23Jan25}
\end{align}
where $H_\Gamma^{(i)}$ and $H_\Delta^{(j)}$ are the $i$-th and $j$-th row vectors of $H_\Gamma$ and $H_\Delta$, respectively.
\end{teiri}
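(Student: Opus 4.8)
The theorem is an intertwining (``transfer'') statement: each binary condition is the image of the corresponding $\Fb_q$-condition under the block-expansion maps $A(\cdot),A^{\T}(\cdot)$ acting on matrices and $\underline{w}(\cdot),\underline{v}(\cdot)$ acting on vectors. My plan is to reduce the whole statement to the scalar compatibility identities of \cite[Appendix B]{komoto2024quantumerrorcorrectionnear}, lift them to matrix--vector products, and conclude by injectivity; only the generator-matrix equivalence \eqref{041406_21Jan25}$\Leftrightarrow$\eqref{124833_23Jan25} will require an extra duality ingredient, which is where the real work lies.

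First I would record the tools. The map $A:\Fb_q\to\Fb_2^{e\times e}$ is an injective ring homomorphism (so $A(\alpha\beta)=A(\alpha)A(\beta)$ and $A$ is additive), the maps $\underline w,\underline v:\Fb_q\to\Fb_2^{e}$ are $\Fb_2$-linear bijections, and the two intertwining identities
\begin{align}
 A(\alpha)\,\underline v(\beta)=\underline v(\alpha\beta),\qquad A^{\T}(\alpha)\,\underline w(\beta)=\underline w(\alpha\beta)
\end{align}
hold for all $\alpha,\beta\in\Fb_q$ (the pairing $A\leftrightarrow\underline v$, $A^{\T}\leftrightarrow\underline w$ is forced by the fact that $H_X=A(H_\Gamma)$ acts on $\underline z=\underline v(\zetaU)$ while $H_Z=A^{\T}(H_\Delta)$ acts on $\underline x=\underline w(\xiU)$). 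Extending $\underline w,\underline v$ blockwise to vectors and writing $A(M)=(A(\mu_{ij}))$, $A^{\T}(M)=(A(\mu_{ij})^{\T})$ for an $\Fb_q$-matrix $M=(\mu_{ij})$, a one-line blockwise computation using additivity upgrades these to the matrix identities
\begin{align}
 A(M)\,\underline v(\underline\phi)=\underline v(M\underline\phi),\qquad A^{\T}(M)\,\underline w(\underline\phi)=\underline w(M\underline\phi).
\end{align}
Since $\underline w,\underline v$ are bijections, $A(M)\underline v(\underline\phi)=\underline 0\Leftrightarrow M\underline\phi=\underline0$, and likewise for the pair $(A^{\T},\underline w)$. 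Applying this with $M=H_\Gamma,H_\Delta$ turns $H_X(\zetaU+\zetaUH)$ into $\underline v\!\left(H_\Gamma(\zetaU+\zetaUH)\right)$ and $H_Z(\xiU+\xiUH)$ into $\underline w\!\left(H_\Delta(\xiU+\xiUH)\right)$, so \eqref{022112_22Jan25} is equivalent to \eqref{124841_23Jan25} by injectivity.

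The generator equivalence \eqref{041406_21Jan25}$\Leftrightarrow$\eqref{124833_23Jan25} is the delicate step and the main obstacle, because it compares the $\Fb_q$-dual code $C_\Gamma^{\perp}=\ker G_\Gamma$ with the $\Fb_2$-dual $C_X^{\perp}=\ker G_X$, and these are defined independently. From the matrix identity, $C_X=\ker H_X=\underline v(\ker H_\Gamma)=\underline v(C_\Gamma)$, and since $\underline x+\xUH=\underline w(\xiU+\xiUH)$, the desired equivalence $G_X(\underline x+\xUH)=0\Leftrightarrow G_\Gamma(\xiU+\xiUH)=0$ is exactly the set identity $\underline w(C_\Gamma^{\perp})=(\underline v(C_\Gamma))^{\perp}$. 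This is not a formal consequence of the intertwining identities: it holds precisely because $(\underline w,\underline v)$ is a trace-dual pair, i.e. $\langle \underline w(\alpha),\underline v(\beta)\rangle_{\Fb_2}=\mathrm{Tr}_{\Fb_q/\Fb_2}(\alpha\beta)$. Granting this pairing, $\underline w(a)\perp\underline v(c)$ for every $c\in C_\Gamma$ reads $\mathrm{Tr}(\langle a,c\rangle_{\Fb_q})=0$ for all $c$, and the non-degeneracy of the trace together with the $\Fb_q$-linearity of $C_\Gamma$ (apply it to $\lambda c$ for all $\lambda\in\Fb_q$) forces $\langle a,c\rangle_{\Fb_q}=0$, i.e. $a\in C_\Gamma^{\perp}$. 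Thus the core of the proof is to verify the trace-duality of $\underline w,\underline v$ from the Appendix~B definitions; equivalently, one may instead prove $G_X=A^{\T}(G_\Gamma)$ and $G_Z=A(G_\Delta)$ directly, via the orthogonality $A^{\T}(G_\Gamma)\,A(H_\Gamma)^{\T}=0$ (which reduces to $H_\Gamma G_\Gamma^{\T}=0$ through the ring-homomorphism property) together with the rank count $\rank_{\Fb_2}A(M)=e\,\rank_{\Fb_q}M$.

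Finally, for \eqref{022213_22Jan25}$\Leftrightarrow$\eqref{124848_23Jan25}, applying $\underline w$ to the single $\Fb_q$-row identity $\xiU+\xiUH=H_\Gamma^{(i)}$ yields $\underline x+\xUH=\underline w(H_\Gamma^{(i)})$, which is the binary row of $H_X=A(H_\Gamma)$ corresponding to block row $i$ under the expansion, and conversely; the same argument with $\underline v$ handles the $\zetaU$-component. The only care needed here is matching the binary row index to the $\Fb_q$-row index through the basis convention of Appendix~B, which is routine bookkeeping rather than a genuine difficulty. Collecting the three equivalences then completes the proof.
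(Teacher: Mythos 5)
Your proposal is correct, but it is genuinely different from the paper's proof for a simple reason: the paper offers no argument at all beyond the one-line citation ``evident from the discussion in \cite[Appendix B]{komoto2024quantumerrorcorrectionnear}''. What you have written is essentially a reconstruction of the deferred appendix material, and it is organized well: the blockwise intertwining identities $A(M)\,\underline{v}(\underline{\phi})=\underline{v}(M\underline{\phi})$ and $A^{\T}(M)\,\underline{w}(\underline{\phi})=\underline{w}(M\underline{\phi})$ settle \eqref{022112_22Jan25}$\Leftrightarrow$\eqref{124841_23Jan25} by injectivity, and you correctly isolate the only non-formal step, namely \eqref{041406_21Jan25}$\Leftrightarrow$\eqref{124833_23Jan25}, where the $\Fb_2$-dual of $C_X=\underline{v}(C_\Gamma)$ must be matched with the $\Fb_q$-dual of $C_\Gamma$; your trace-duality argument ($\langle\underline{w}(\alpha),\underline{v}(\beta)\rangle=\mathrm{Tr}_{\Fb_q/\Fb_2}(\alpha\beta)$, then non-degeneracy of the trace plus $\Fb_q$-linearity of $C_\Gamma$ applied to $\lambda c$) is exactly right, and the alternative route $G_X=A^{\T}(G_\Gamma)$, $G_Z=A(G_\Delta)$ via the ring-homomorphism property and the rank count $\rank_{\Fb_2}A(M)=e\rank_{\Fb_q}M$ is also valid. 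What your version buys is that the crux the paper leaves implicit becomes visible: the pairing of $A$ with $\underline{v}$ and of $A^{\T}$ with $\underline{w}$ is not decorative but is forced by trace duality. One caveat on the third equivalence, which you wave off as ``routine bookkeeping'': the binary rows of $H_X=A(H_\Gamma)$ within block row $i$ are $\underline{w}(\beta_k H_\Gamma^{(i)})$, where $\beta_k$ runs over the $\underline{w}$-preimages of the standard basis vectors, so $\xU+\xUH$ equal to a binary row of $H_X$ corresponds to $\xiU+\xiUH$ being a \emph{nonzero $\Fb_q$-scalar multiple} of $H_\Gamma^{(i)}$, not $H_\Gamma^{(i)}$ itself; the literal equivalence \eqref{022213_22Jan25}$\Leftrightarrow$\eqref{124848_23Jan25} thus holds only up to such scalars. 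This imprecision sits in the paper's own statement and is harmless in use (both sides serve only as sufficient conditions for degeneracy, \eqref{124833_23Jan25} is scalar-invariant, and the Type-I decoder in Section~\ref{212737_22Jan25} indeed works with $aH_\Gamma^{(i)}$ for arbitrary $a\in\Fb_q$), but the row indices do not match one-to-one, so it deserves a sentence rather than a shrug.
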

\begin{proof}
The proof is evident from the discussion in \cite[Appendix B]{komoto2024quantumerrorcorrectionnear}.
\end{proof}
\section{Conventional Noise Estimation Method}
In this section, we review the noise estimation method presented in \cite{komoto2024quantumerrorcorrectionnear}. The method involves dividing noise vectors into segments, modeling their probabilities under a depolarizing channel, and applying the SP iterative decoding algorithm to estimate noise. 

First, we divide the noise vectors $\underline{x}, \underline{z}\in \Fb_2^n$ into $e$-bit segments and write:
\[
\xU=\left(x_1, \ldots, x_N\right), \quad \zU=\left(z_1, \ldots, z_N\right),
\]
where $x_j, z_j\in \Fb_2^e$.  
For simplicity in notation, we omit random variables and write $\Pr(X=x)$ as $p(x)$.  
The probability of occurrence for \(\xU, \zU \in \Fb_2^{eN}\) under the depolarizing channel with probability $p_D$ is expressed as:
\begin{align}
    p(\xU,\zU) &= \prod_{j=1}^N p(x_{j},z_{j}), \\
    p(x_{j},z_{j}) &= \prod_{k=1}^e p(x_{j}^k,z_{j}^k), \\
    p(x,z) &=
    \begin{cases}
        1-p_D, &  (x,z)=(0,0), \\
        \frac{p_D}{3}, & (x,z)=(0,1),(1,0),(1,1),
    \end{cases}
\end{align}
where $x_j^k, z_j^k\in \Fb_2$ are the $k$-th bits of $x_j$ and $z_j$, respectively.  
The flip probabilities $f_m$ for X and Z errors are obtained by marginalizing $p(x,z)$ as $f_m = 2p_D/3$.

The syndromes $\sU, \tU$ are similarly divided into $e$-bit segments:
\[
\sU=(s_1,\ldots,s_N)=H_Z\xU, \quad \tU=(t_1,\ldots,t_N)=H_X\zU,
\]
where $s_i, t_i \in \Fb_2^e$.  
The posterior probability of the noise given the syndromes, $p(\xU,\zU|\sU,\tU)$, factorizes as shown in \eqref{052739_20Jan25}.  
It can also be expressed as a function in $\Fb_q$ as shown in \eqref{154500_21Jan25} \cite[Appendix C.2]{komoto2024quantumerrorcorrectionnear}.  
For $s_i, t_i, x_j, z_j \in \mathbb{F}_2^e$, we define $\sigma_i, \tau_i, \xi_j, \zeta_j \in \mathbb{F}_q$ such that $\underline{w}\left(\sigma_i\right)=s_i$, $\underline{v}\left(\tau_i\right)=t_i$, $\underline{w}\left(\xi_j\right)=x_j$, and $\underline{v}\left(\zeta_j\right)=z_j$.  
Note that $\underline{v}(\cdot)$ and $\underline{w}(\cdot)$ were defined in \cite[Appendix B]{komoto2024quantumerrorcorrectionnear}.  
The variables $H_\Gamma, H_\Delta, \xiU, \zetaU, \sigmaU, \tauU$ can be regarded as the $\Fb_q$-valued representations of $H_X, H_Z, \xU, \zU, \sU, \tU$, respectively.  
\begin{figure*}
 \begin{align}
    p(\xU,\zU|\sU,\tU)
 &\propto    \Bigl(\prod_{i=1}^{M} \I\bigl[\sum_{j=1}^{N}\textcolor{black}{(H_Z)}_{ij}x_j=s_i\bigr] \Bigr)
    \Bigl( \prod_{i=1}^M \I\bigl[\sum_{j=1}^{N}\textcolor{black}{(H_X)}_{ij}z_j=t_i\bigr] \Bigr)
     \prod_{j=1}^N p\left(x_j, z_j\right) \label{052739_20Jan25}
 \\&=    \Bigl( \prod_{i=1}^M \I\bigl[\sum_{j=1}^N\textcolor{black}{\delta_{ij}}\xi_j=\sigma_i\bigr] \Bigr)
    \Bigl( \prod_{i=1}^M \I\bigl[\sum_{j=1}^N\textcolor{black}{\gamma_{ij}}\zeta_j=\tau_i\bigr] \Bigr)
     \prod_{j=1}^N p\bigl(\wU(\xi_j), \vU(\zeta_j)\bigr) .   \label{154500_21Jan25}
 \end{align}
\end{figure*}

Using the iterative SP algorithm, the estimated noise $(\xiUH^{(\ell)}, \zetaUH^{(\ell)})$ is determined in each SP iteration round $\ell$.  
In \cite{komoto2024quantumerrorcorrectionnear}, the stopping condition and decoding success/failure determination were performed according to the procedure in Algorithm \ref{013401_23Jan25}.  
This criterion judged harmless degenerate errors as decoding failures, resulting in a strict criterion.

\begin{algorithm}[t]
\caption{Success/failure determination algorithm for estimated noise $(\xiUH,\zetaUH)$ for true noise $(\xiU,\zetaU)$ \cite{komoto2024quantumerrorcorrectionnear}.}\label{alg:decoding}
\label{013401_23Jan25}\begin{algorithmic}[1]
\State $\ell \gets 0$
\While{True}
    \State Update $(\xiUH^{(\ell)}, \zetaUH^{(\ell)})$ by one round of SP iteration.
    \If{$H_\Delta\xiUH^{(\ell)} = \sigmaU \land H_\Gamma\zetaUH^{(\ell)} = \tauU$}
	   \State $(\xiUH,\zetaUH)\gets (\xiUH^{(\ell)},\zetaUH^{(\ell)})$
        \If{$\xiUH=\xiU\land  \zetaUH= \zetaU$}
            \State \textbf{Output:}  Decoding Success
        \Else
            \State \textbf{Output:} Decoding Failure
        \EndIf
    \Else
        \State $\ell \gets \ell + 1$
    \EndIf
\EndWhile
\end{algorithmic}
\end{algorithm}

The frame error rate (FER) based on this criterion is plotted as the solid curves in Fig.~\ref{fig:rainbow}.  
Each curve group corresponds to $L=16, 10, 8, 6$, from left to right.  
For $L$ values other than $L=6$, the curves exhibit no high error floor, and a sharp threshold phenomenon is observed as the code length increases.  
The curve for $L=6$, however, has a high error floor, which becomes even higher as the code length increases.

In the next section, we provide the characteristics of noise estimation errors that cause the error floor in the case of codes with $L=6$.
\section{Classification of Noise Estimation Errors Causing the Error Floor}

This section focuses on situations in the error floor region where the estimation of X-noise fails.  
The estimated X-noise in the SP algorithm at iteration $\ell$ is denoted by $\xiUH^{(\ell)}$.  
The estimation of Z-noise $\zetaUH^{(\ell)}$ can be treated symmetrically in the same manner.  

Upon examining errors in the error floor region, it was observed that estimation errors $\xiUH^{(\ell)}(\neq \xiU^{(\ell)})$ for large $\ell$ 
were caused by estimated noise $\xiUH^{(\ell)}$ such that the Hamming distance between $\xiUH^{(\ell)}$ and $\xiU^{(\ell)}$ was $\le L$.  
This paper focuses exclusively on such estimation errors.
In this section, we further elucidate the graph-theoretical characteristics of the estimation errors.
Specifically, all the esitimation errors were observed to occur due to being trapped in a cycle of length $2L$.

To clearly describe the graph-theoretical characteristics, the Tanner graph corresponding to the matrix $H_\Delta$ is considered, with its columns treated as variable nodes, its rows as check nodes, and its submatrices treated as subgraphs interchangeably.
The set of all cycles of length $2L$ in $H_\Delta$ is denoted by $\Cc_{\Delta}^{(2L)}$.  

The set of indices of incorrectly estimated noise $\xiUH^{(\ell)}$ is denoted as $J_\Erm(\xiUH^{(\ell)})= \{ j \in [N] \mid \xiH^{(\ell)}_j \neq \xi_j \}$.
In most cases of decoding failure in the error floor region, the estimated noise $\xiUH^{(\ell)}$ was trapped in a cycle of length $2L$. Specifically,  
$J_\Erm(\xiUH^{(\ell)}) \subset J(\Ct_\Delta)  $
for some $\Ct_\Delta \in \Cc_{\Delta}^{(2L)}$.  

Such cycles $\Ct_\Delta$ can be classified into three types based on their characteristics.  
For this classification, some preliminary definitions are necessary.
The set of column indices corresponding to the non-zero entries of the $i$-th row $H_\Gamma^{(i)}$ of $H_\Gamma$ is denoted as $J_\Gamma^{(i)}$, where $i \in [M]$ and $[M] = \{1, \ldots, M\}$.  
The Tanner graph restricted to the columns of $H_\Delta$ indexed by $J_\Gamma^{(i)}$ is denoted by $\Ct_\Delta^{(i)}$.  
This graph, $\Ct_\Delta^{(i)}$, represents a single cycle of length $2L$, with the corresponding matrix having both row weight and column weight equal to 2.
Moreover, $\Ct_\Delta^{(i)}$ corresponds to an $\Fb_q$-valued matrix of size $L \times L$ and rank $L-1$ \cite{6017122,komoto2024quantumerrorcorrectionnear}.
The set of such $\Ct_\Delta^{(i)}$ for $i \in [M]$ is denoted by $\CcT_{\Delta}^{(2L)}$, and it holds that $\CcT_{\Delta}^{(2L)} \subset \Cc_{\Delta}^{(2L)}$.  

We classify the cycles $\Ct_\Delta \in \Cc_{\Delta}^{(2L)}$  into three types based on their membership in $\CcT_{\Delta}^{(2L)}$ and their rank, as follows:
\[
\begin{array}{lll}
 \text{Type I:} & \Ct_\Delta \in \CcT_{\Delta}^{(2L)}, & \\
 \text{Type II:} & \Ct_\Delta \notin \CcT_{\Delta}^{(2L)}, & \rank(\Ct_\Delta) < L, \\
 \text{Type III:} & \Ct_\Delta \notin \CcT_{\Delta}^{(2L)}, & \rank(\Ct_\Delta) = L.  
\end{array}
\]
In the following sections, we address the identification of the cycle in which estimation errors become trapped and the methods for correcting these estimation errors.
\section{Cycle Identification Method}

In this section, we propose a method to identify the cycle $\Ct_\Delta \in \Cc_{\Delta}^{(2L)}$ in which the estimated noise is trapped, i.e., satisfying $J_\Erm(\xiUH^{(\ell)}) \subset J(\Ct_\Delta)$.  
It should be noted that the decoder cannot directly know the error set $J_\Erm(\xiUH^{(\ell)})$.

Consider a sufficiently large iteration round $\ell$ where the SP decoding has sufficiently converged, or at most $L$ symbols are oscillating. 
For such $\ell$, let $J(\ell)$ denote the set of indices $j$ where $\xiH_j^{(\ell)}$ has changed during the past $L$ iterations:
\begin{align}
 J(\ell) = \{ j \mid \xiH_j^{(\ell'-1)} \neq \xiH_j^{(\ell')} \text{ for some } \ell' \in [\ell-L, \ell] \}.
\end{align}
Experiments revealed that when the size of $J(\ell)$ is not zero, it is always greater than 1.
Let $I(\ell)$ be the set of row indices corresponding to syndromes computed from the estimated noise $\xiUH^{(\ell)}$ that do not match the actual syndromes:
\begin{align}
 I(\ell) = \{ i \mid \sigma_i^{(\ell)} \neq \sigma_i, \sigmaUH^{(\ell)} = H_\Delta \xiUH^{(\ell)} \}.
\end{align}

We define $\hat{\Ct}_\Delta$ as the estimated cycle of $\Ct_\Delta$ that satisfies the following conditions:
\begin{align}
 J(\ell) \subset J(\hat{\Ct}_\Delta), \quad I(\ell) \subset I(\hat{\Ct}_\Delta).
\end{align}

Recall that the matrix $H_\Delta$ is constructed to avoid cycles of length 4.  
Moreover, since the column weight is $J=2$, no two cycles in $\Cc_{\Delta}^{(2L)}$ share the same pair of columns.  
Therefore, $\hat{\Ct}_\Delta$ can be identified from $J(\ell)$ with a computational complexity of $O(L)$. If identification fails, decoding failure is declared.

Next, it is determined whether $\hat{\Ct}_\Delta$ is Type-I (in which case $i$ is also identified) or Type-II/III.  
To do this, select any two distinct indices $j, j' \in J(\hat{\Ct}_\Delta)$ and identify the row $H_\Gamma^{(i)}$ that contains nonzero components corresponding to $j$ and $j'$.  
This can be achieved with a computational complexity of $O(L)$.  

If such a row $H_\Gamma^{(i)}$ exists, $\hat{\Ct}_\Delta$ is classified as Type-I; otherwise, it is a Type-II or Type-III cycle.  
The distinction between these two types is determined by calculating the rank of $\hat{\Ct}_\Delta$.
\section{Proposed Decoding Method}
In the previous section, the cycle $\hat{\Ct}_\Delta$ where $\xiUH^{(\ell)}$ was trapped was identified using its history over the past $L$ iterations.  
In this section, we propose a method to determine the new estimated noise $\xiUH$ using this cycle $\hat{\Ct}_\Delta$.  
Decoding success or failure is judged based on whether \eqref{124833_23Jan25} is satisfied. 

The decoder design assumes that the noise is correctly estimated outside the indices in $J:=J(\hat{\Ct}_\Delta)$, i.e., $\xiUH^{(\ell)}_{\JO} = \xiU_{\JO}$, wherer we define $\JO := [N] \setminus J$.  
  Thus, the task of the decoder is reduced to determining the estimated noise $\xiUH^{(\ell)}_J$ for the indices in $J$.  
The entire estimated noise $\xiUH$ is then constructed by concatenatin $\xiUH_{\JO}$ and $\xiUH_{J}$.  

From Theorem \ref{165121_21Jan25}, it is important to note that even if $\xiU_{J} \neq \xiUH^{(\ell)}_{J}$, decoding can still succeed as long as $G_\Gamma(\xiU + \xiUH) = \zeroU$ is satisfied, meaning the estimate does not harm the recovery of the codeword.
\subsection{Type-I}\label{212737_22Jan25}
To correct Type-I errors, decoding succeeds if an estimated error $\xiUH$ satisfying $\xiUH + \xiU = aH_\Gamma^{(i)}$ can be found, based on Theorem \ref{165121_21Jan25}.  
Let $J := J(\hat{\Ct}_\Delta)$ and $I := I(\hat{\Ct}_\Delta)$. The following steps determine such an estimated error $\xiUH$:
Restrict $\xiUH^{(\ell)}$ and $H_\Delta$ to the components in $\JO$, and denote them as $\xiUH^{(\ell)}_{\JO}$ and $(H_\Delta)_{\JO}$, respectively.  
The resulting syndrome, ignoring contributions from the components in $J$, is written as:
\[
(H_\Delta)_{\JO} \xiUH^{(\ell)}_{\JO} = \sigmaU_0.
\]
If there are no estimation errors outside $J$, i.e., $J_\Erm(\xiUH^{(\ell)}) \subset J$, the syndrome will match that of the true noise:
\[
(H_\Delta)_{\JO} \xiU_{\JO} = \sigmaU_0.
\]

Next, recall that $\hat{\Ct}_\Delta$ is a $\Fb_q$-valued matrix  of size $L \times L$ and  rank $L-1$.  
Solve the linear equation $\hat{\Ct}_\Delta \xiUH_J = (\sigmaU_0)_I$ to find a solution $\xiUH_J \in \Fb_q^L$.  
Since $\hat{\Ct}_\Delta$ forms a single cycle, this can be achieved with computational complexity $O(L)$.  
For indices outside $J$, assign $\xiUH_{\JO} := \xiUH^{(\ell)}_{\JO}$.  
Concatenating $\xiUH_J$ and $\xiUH_{\JO}$, construct $\xiUH$ as the estimated noise.

Suppose the solution $\xiUH_J$ chosen from the equation above differs from the true noise, $\xiUH_J \neq \xiU_J$.  
This does not affect the recovery of the codeword. 
To explain this, note that the true noise $\xiU_J$ is a particular solution to the equation $\hat{\Ct}_\Delta \xiUH_J = (\sigmaU_0)_I$, while the homogeneous solution is a scalar multiple of $(H_\Gamma^{(i)})_J$.  
Thus, $\xiUH_J$ can be expressed as:
\[
\xiUH_J = a(H_\Gamma^{(i)})_J + \xiU_J \quad \text{for any } a \in \Fb_q.
\]
Therefore, $\xiUH + \xiU = aH_\Gamma^{(i)}$, which satisfies $G_\Gamma(\xiUH + \xiU) = \zeroU$ by Theorem \ref{165121_21Jan25}, ensuring successful decoding.

\subsection{Type-II}
Consider the case where $\hat{\Ct}_\Delta$ is not full rank.  
In this case, experimental results showed that the SP decoder always converged.  
This implies that the incorrect estimated noise $\xiUH^{(\ell)}$ produces the same syndrome as the true noise $\xiU (\neq \xiUH^{(\ell)})$:  
\[
H_\Delta \xiU = H_\Delta \xiUH^{(\ell)}.
\]
As a result, $\xiUH^{(\ell)}$ is adopted as the estimated noise $\xiUH$, and decoding fails since $G_\Gamma(\xiU + \xiUH) \neq \zeroU$.  
Worse still, this failure cannot be detected by the decoder, leading to an undetected decoding failure.

To avoid such failures, codes can be designed to eliminate Type-II cycles.  
However, this paper does not explore this approach further.
\subsection{Type-III}
Finally, consider the case where $\hat{\Ct}_\Delta$ is full rank.  
In this case, $\xiUH_J$ is determined using the same approach as for Type-I.  
The difference from Type-I is that since $\hat{\Ct}_\Delta$ is full rank, the linear equation $\hat{\Ct}_\Delta \xiUH_J = (\sigma_0)_I$ has the true noise $\xiU_J$ as its unique solution.  
By solving this equation, the true noise $\xiU_J$ can be obtained. 
Using the method in \cite{5513602}, this can also be achieved with a computational complexity of $O(L)$.
\section{Numerical Experiments}
The decoding performance \((f_m, \mathrm{FER})\) of the proposed decoding method for the depolarizing channel, with parameters \(e=8\), \(J=2\), \(L=6\), \(P \in \{128, 1024, 8192\}\), and \(R=1/3\), is plotted in Fig.~\ref{fig:rainbow}. For comparison, conventional SP decoding for \(L \in \{6, 8, 10, 16\}\), \(P \in \{32, 128, 1024, 8192\}\), and \(R \in \{0.33, 0.50, 0.60, 0.75\}\) is also plotted as solid curves. The black solid line indicates the hashing bound. The code length is given by \(n = ePL\).

It is evident that conventional SP decoding alone, for \(L > 6\), simultaneously achieves both deep error floors and sharp thresholds. It is expected that the decoding failure probability can be reduced by increasing the code length for \(f_m\) smaller than these threshold points. The threshold points where the curves intersect are marked with circles. On the other hand, for \(L = 6\), a high error floor was observed. The error floor has been largely improved by the proposed method, but an error floor is still observed.

A closer look at the experimental results shows that all Type-I and Type-III estimation errors were corrected.
The remaining errors in the error floor consisted entirely of Type-II estimation errors.  
These errors are inherent to the code and cannot be addressed without redesigning the code construction. 
Conversely, if Type-II errors can be eliminated, it is possible to achieve a significantly deep error floor.
\begin{figure}[t]
    \centering
    \includegraphics[width=1.0\linewidth]{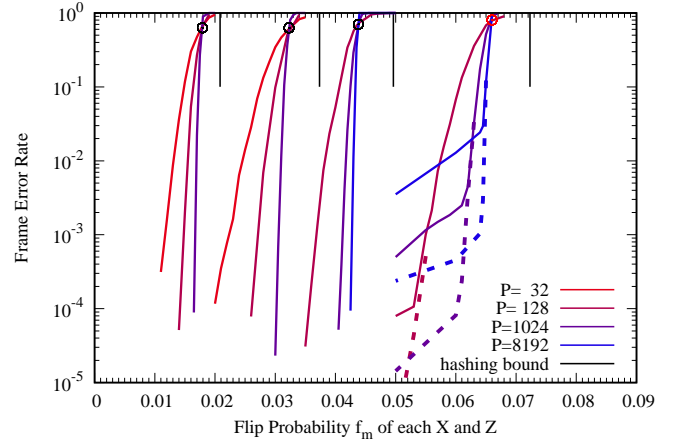}
    \caption{Decoding performance \((f_m, \mathrm{FER})\) of the proposed method (dashed).
    }
    \label{fig:rainbow}
\end{figure}

\section{Conclusion and Future Work}
In this paper, we proposed an efficient method for correcting estimation errors in quantum LDPC codes. 
Our method achieves a significant improvement in decoding performance, particularly by reducing the error floor, which has been a major challenge in previous approaches.

Nevertheless, some errors—especially those arising from Type-II estimation errors—remain an inherent challenge. 
These errors are closely tied to the structure of the code and cannot be easily resolved without reconsidering the code construction itself. 
Addressing these issues could enable even deeper reductions in the error floor, further enhancing the practical effectiveness of quantum error correction.

As part of future work, we plan to predict and mitigate the error floor by leveraging methods developed in the study of error floors for non-binary classical LDPC codes~\cite{5962725,6283931,e94-a_11_2144,e95-a_1_381,e95-a_12_2113}.

\bibliographystyle{IEEEtran}
\bibliography{IEEEabrv,00kasai} 
\end{document}